\long\def\beginpgfgraphicnamed#1#2\endpgfgraphicnamed{\includegraphics{#1}}
\DeclareMathOperator{\prefto}{\rhd}
\DeclareMathOperator{\nprefto}{\not\kern-2.5pt\rhd}
\newcommand{\mso}{\dot\{}
\newcommand{\msc}{\dot\}}
\newcommand{\defect}{\ensuremath{\mathbb{D}}}
\newcommand{\mcup}{\mathop{\dot\cup}}
\newtheorem{theorem}{Theorem}
\newtheorem{lemma}[theorem]{Lemma}
\newtheorem{note}[theorem]{Note}
\theoremstyle{definition}
\newtheorem{definition}[theorem]{Definition}
\newtheorem{example}[theorem]{Example}
\newcommand{\ES}{\mbox{$\emptyset$}}
\newcommand{\A}{\mbox{$\ \wedge\ $}}
\newcommand{\sse}{\mbox{$\:\subseteq\:$}}
\newcommand{\fa}{\mbox{$\forall$}}
\newcommand{\te}{\mbox{$\exists$}}
\newcommand{\LL}{\mbox{$\ldots$}}
\newcommand{\C}[1]{\mbox{$\{{#1}\}$}}           % curly braces
\newcommand{\NI}{\noindent}
\newcommand{\HB}{\hfill{$\Box$}}
\newcommand{\VV}{\vspace{5 mm}}
\newcommand{\II}{\vspace{2 mm}}
\newcommand{\szkew}[1]{\relax \setbox0=\hbox{\kern -24pt $\displaystyle#1$\kern 0pt }%
%\advance\ht0 by 0pt %
%\advance\dp0 by -10pt %
\box0}
{\catcode`\@=11 \global\let\ifjusthvtest@=\iffalse}
\newcounter{oldmycaption}
\title{A Generic Approach to Coalition Formation}
\author{Krzysztof R. Apt and Andreas Witzel\thanks{Andreas Witzel is supported by
    a GLoRiClass fellowship funded by the European Commission
    (Early Stage Research Training Mono-Host Fellowship MEST-CT-2005-020841).} \\
{\small CWI, Amsterdam, the Netherlands} \\
{\small and University of Amsterdam}
}
\date{}
\makeatletter\pdfinfo{
  /Title (A Generic Approach to Coalition Formation)
  /Author (Krzysztof R. Apt and Andreas Witzel)
}\makeatother\fi
\begin{document}
\sloppy

\maketitle

\begin{abstract}
  We propose an abstract approach to coalition formation that focuses
  on simple merge and split rules transforming partitions of a group
  of players. We identify conditions under which every iteration of
  these rules yields a unique partition. The main conceptual tool is
  a specific notion of a stable partition. The results are parametrized by a
  preference relation between partitions of a group of players and
  naturally apply to coalitional TU-games, hedonic games and exchange
  economy games.
\end{abstract}

%\keywords{Coalition formation; stable partition; merges and splits.}

%\ccode{Subject Classification: 22E46, 53C35, 57S20.}

\section{Introduction}

\subsection{Approach}

Coalition formation has been a research topic of continuing interest
in the area of coalitional games.  It has been analyzed from several
points of view, starting with \cite{AD74}, where the static situation
of coalitional games in the presence of a given coalition structure
(i.e.~a partition) was considered. 

In this paper we consider the perennial question `how do coalitions
form?' by proposing a simple answer: `by means of merges and splits'.
This brings us to the study of a natural problem, namely under what
assumptions the outcomes of arbitrary sequences of merges and splits
are unique.

These considerations yield an abstract approach to coalition formation
that focuses on partial preference relations between partitions of a
group of players and simple merge and split rules. These rules
transform partitions of a group of players under the condition that
the resulting partition is preferred. By identifying conditions under
which every iteration of these rules yields a unique partition we are
brought to a natural notion of a stable partition.

This approach is parametrized by a generic preference
relation. The obtained results depend only on a few simple properties,
namely irreflexivity, transitivity and monotonicity, and do not require
any specific model of coalitional games.

In the case of coalitional TU-games the preference relations induced
by various well-known orders on sequences of reals, such as leximin or
Nash order, satisfy the required properties. As a consequence our
results apply to the resulting preference relations and coalitional
TU-games.  We also explain how our results apply to hedonic games
(games in which each player has a preference relation on the sets of
players that include him) and exchange economy games.

This approach to coalition formation is indirectly inspired by the
theory of abstract reduction systems (ARS), see, e.g.~\cite{Ter03}, one
of the aims of which is a study of conditions that guarantee a unique
outcome of rule iterations.  In an earlier work \cite{Apt04} we
exemplified another benefit of relying on ARS by using a specific
result, called Newman's Lemma, to provide uniform proofs of order
independence for various strategy elimination procedures for finite
strategic games.  

\subsection{Related work}

Because of this different starting point underpinning our approach, it
is difficult to compare it to the vast literature on the subject
of coalition formation.  Still, a number of papers should be mentioned
even though their results have no bearing on ours.

In particular, rules that modify coalitions are considered in \cite{Yi97} in the
presence of externalities and in \cite{RV97} in the presence of
binding agreements.  In both papers two-stage games are analyzed. In
the first stage coalitions form and in the second stage the players
engage in a non-cooperative game given the emerged coalition
structure.  In this context the question of stability of the coalition
structure is then analyzed.

The question of (appropriately defined)
stable coalition structures often focused on hedonic
games.  \cite{BJ02} considered four forms of
stability in such games: core, Nash, individual and
contractually individual stability.  Each alternative captures the
idea that no player, respectively, no group of players has an incentive
to change the existing coalition structure.  The problem of existence
of (core, Nash, individually and contractually individually) stable
coalitions was considered in this and other
references, for example \cite{SBK01} and~\cite{BZ03}.  
% A potentially
% infinitely long coalition formation process in the context of hedonic games
% was studied in \cite{BD05}.  This leads to another notion of
% stability analogous to subgame perfect equilibrium.

Recently, \cite{BJ06} compared various notions of stability
and equilibria in network formation games. These are games in which the
players may be involved in a network relationship that, as a graph, may evolve.
Other interaction structures which players can form were considered in
\cite{Dem04}, in which formation of hierarchies was studied, and
\cite{MSPCP06}, in which only bilateral agreements that follow a
specific protocol were allowed.

Early research on the subject of coalition formation is discussed in
\cite{Gre94}.  More recently, various aspects of coalition formation
are discussed in the collection of articles \cite{DW05} and in 
the survey \cite{RePEc:urb:wpaper:07_12}.
Initially, we obtained the corresponding results in
\cite{AR06} in a limited setting of coalitional TU-games and the
preference relation induced by the utilitarian order.

% In \cite{AR06} we introduced the concept of a stable partition for
% coalitional TU-games and investigated whether and how
% stable partitions can be reached from any initial partition by means
% of simple transformations.  The underlying concept of `quality' of a
% partition was defined there by means of social welfare, which is
% simply the summed value of the partition.

% Finally, the computer science perspective is illustrated by \cite{CB04}
% in which an approach to coalition formation 
% based on Bayesian reinforcement was considered and tested empirically.

\subsection{Plan of the paper}

The paper is organized as follows. In the next section we set the
stage by introducing an abstract comparison relation between
partitions of a group of players and the corresponding merge and split
rules that act on such partitions.  Then in Section \ref{sec:TU} we
discuss a number of natural comparison relations on partitions
within the context of coalitional TU-games and in Section \ref{sec:ind}
by using arbitrary value functions for such games.

Next, in Section \ref{sec:stable}, we introduce and study a
parametrized concept of a stable partition and in Section
\ref{sec:unique} relate it to the merge and split rules. Finally, in
Section \ref{sec:conc} we explain how to apply the obtained results to specific coalitional
games, including TU-games, hedonic games and exchange economy games,
and in Section \ref{sec:conclusions} we summarize our approach.

\section{Comparing and transforming collections}

Let $N = \{1,2,\ldots ,n\}$ be a fixed set of players called the
\textit{grand coalition}.  Non-empty subsets of $N$ are called
\textit{coalitions}. A \textit{collection} (in the grand
coalition $N$) is any family $C := \{C_1,\ldots, C_l\}$ of mutually
disjoint coalitions, and $l$ is called its \textit{size}. If
additionally $\bigcup_{j=1}^l C_j = N$, the collection $C$ is called a
\textit{partition} of $N$.  For $C=\{C_1,\dots,C_k\}$, we define
$\bigcup C:=\bigcup_{i=1}^k C_i$.

In this article we are interested in comparing collections.
In what follows we only compare collections $A$ and $B$ that
are partitions of the same set, i.e.~such that $\bigcup A = \bigcup B$.
Intuitively, assuming a comparison relation $\prefto$, $A
\prefto B$ means that the way $A$ partitions $K$, where $K = \bigcup A
= \bigcup B$, is preferable to the way $B$ partitions $K$.

To keep the presentation uniform we only
assume that the relation $\prefto$ is irreflexive, 
i.e.~for no collection $A$, $A\prefto A$ holds, 
transitive, i.e.~for all
collections $A,B,C$
with $\bigcup A=\bigcup B=\bigcup C$,
$A\prefto B$ and $B\prefto C$ imply $A\prefto C$,
and that $\prefto$ is
\emph{monotonic} in the following two senses: for all collections $A,B,C,D$ with $\bigcup A=\bigcup
B$, $\bigcup C=\bigcup D$, and $\bigcup A\cap\bigcup C=\emptyset$,
\begin{equation*}
  \label{eq:mon1}\tag{m1}
  A\prefto B\text{ and }C\prefto D\text{ imply }A\cup C\prefto B\cup D,
\end{equation*}
and for all collections $A,B,C$ with $\bigcup A=\bigcup B$ and
$\bigcup A\cap\bigcup C=\emptyset$,
\begin{equation*}
  \label{eq:mon2}\tag{m2}
  A\prefto B\text{ implies }A\cup C\prefto B\cup C.
\end{equation*}

The role of monotonicity will become clear in Section \ref{sec:stable}, though property
\eqref{eq:mon2} will be already of use in this section.

\begin{definition}
By a \textit{comparison relation} we mean
an irreflexive and transitive relation on collections that satisfies the conditions
\eqref{eq:mon1} and \eqref{eq:mon2}.
\HB
\end{definition}

Each comparison relation $\prefto$ is used only to compare partitions
of the \emph{same} set of players.  So partitions of different sets of
players are incomparable w.r.t.~$\prefto$, that is no comparison
relation is linear. This leads to a more restricted form of linearity,
defined as follows.  We call a comparison relation $\prefto$ 
\emph{semi-linear} if for all collections $A,B$ with $\bigcup
A=\bigcup B$, either $A\prefto B$ or $B\prefto A$.

In what follows we study coalition formation by focusing on
the following two rules that allow us to transform partitions of 
the grand coalition:

\begin{description}
\item[merge:] $\{T_1,\dots,T_k\}\cup
  P\rightarrow\{\bigcup_{j=1}^kT_j\}\cup P$, where $\{\bigcup_{j=1}^kT_j\}\prefto\{T_1,\dots,T_k\}$
\item[split:] $\{\bigcup_{j=1}^kT_j\}\cup P\rightarrow\{T_1,\dots,T_k\}\cup
  P$, where $\{T_1,\dots,T_k\}\prefto\{\bigcup_{j=1}^kT_j\}$
\end{description}

Note that both rules use the $\prefto$ comparison relation `locally',
by focusing on the coalitions that take part and result from the merge, respectively split.
In this paper we are interested in finding conditions that guarantee that
arbitrary sequences of these two rules yield the same outcome.  
So, once these conditions hold, a specific \emph{preferred} partition exists 
such that any initial partition can be transformed into it by applying
the merge and split rules in an arbitrary order.

To start with, note that the termination of the iterations of these
two rules is guaranteed.

\begin{note}
\label{not:1}
Suppose that $\prefto$ is a comparison relation.
Then every iteration of the merge and split rules terminates.
\end{note}
\begin{proof}
Every iteration of these two rules produces by
\eqref{eq:mon2} a sequence of partitions
$P_1,P_2,\dots$ with $P_{i+1}\prefto P_i$ for all
  $i\geq1$. But the number of different partitions is
  finite. So by transitivity and irreflexivity of
  $\prefto$ such a sequence has to be finite.
\end{proof}

The analysis of the conditions guaranteeing the unique outcome of the
iterations is now deferred to Section \ref{sec:unique}.

\section{TU-games}
\label{sec:TU}

To properly motivate the subsequent considerations and to clarify the
status of the monotonicity conditions we now introduce some natural
comparison relations on collections for coalitional TU-games.  A
\emph{coalitional TU-game} is a pair $(N,v)$, where $N := \C{1,\LL,
  n}$ and $v$ is a function from the powerset of $N$ to the set of
non-negative reals\footnote{The assumption that the values of $v$ are
  non-negative is non-standard and is needed only to accomodate for
  the Nash order, defined below.} such that
$v(\emptyset) = 0$.

For a coalitional TU-game $(N,v)$ the comparison relations on collections
are induced in a canonic way from the corresponding comparison
relations on multisets of reals by stipulating that for the
collections $A$ and $B$

\begin{equation}
  \label{eq:prefto}
\mbox{$A \prefto B$ iff $v(A) \prefto v(B)$,}  
\end{equation}
where for a collection $A := \{A_1, \LL, A_m\}$,
$v(A) := \mso v(A_1), \LL, v(A_m)\msc$, denoting multisets in dotted braces.

So first we introduce the appropriate relations on multisets of non-negative reals.
The corresponding definition of monotonicity for such a relation $\prefto$
is that for all multisets $a,b,c,d$ of reals
\[
\mbox{$a \prefto b$ and $c \prefto d$ imply $a \mcup c \prefto b \mcup d$}
\]
and
\[
\mbox{$a \prefto b$ implies $a \mcup c \prefto b \mcup c$,}
\]
where $\mcup$ denotes multiset union.

Given two sequences $(a_1, \LL, a_m)$ and $(b_1, \LL, b_n)$ of real numbers
we define the (extended) \textit{lexicographic order} on them by
putting
\[
\mbox{$(a_1, \LL, a_m) >_{lex} (b_1, \LL, b_n)$}
\]
iff
\[
\mbox{$\te i \leq min(m,n) \: (a_i > b_i \A \fa j < i \: a_j = b_j)$}
\]
or
\[
\mbox{$\fa i \leq min(m,n) \: a_i = b_i \A m > n$.}
\]

Note that in this order we compare sequences of possibly different
length.  We have for example $(1,1,1,0) >_{lex} (1,1,0)$ and
$(1,1,0) >_{lex} (1,1)$. It is straightforward to check that it
is a linear order.

We assume below that $a = \mso a_1, \LL, a_m\msc$ and $b = \mso b_1, \LL,
b_n\msc$ and that $a^{*}$ is a sequence of the elements of $a$
in decreasing order, and define

\begin{itemize}

\item the \emph{utilitarian} order:

$a \succ_{ut} b$ iff $\sum_{i=1}^{m} a_i > \sum_{j=1}^{n} b_j$,

\item the \emph{Nash} order:

$a \succ_{Nash} b$ iff $\prod_{i=1}^{m} a_i > \prod_{j=1}^{n} b_j$,

\item the \emph{leximin} order:

$a \succ_{lex} b$ iff $a^{*} >_{lex} b^{*}$.

\end{itemize}

In \cite{Mou88} these relations were considered for sequences of the same
length. For such sequences we shall discuss in Section \ref{sec:ind}
two other natural orders. The intuition behind the Nash order is that
when the sum $\sum_{i=1}^{m} a_i$ is fixed, the product $\prod_{i=1}^{m}
a_i$ is largest when all $a_i$s are equal.  So in a sense the Nash
order favours an equal distribution.

The above relations are clearly irreflexive and transitive.
Additionally the following holds.

\begin{note}
The above three relations are all monotonic both in sense~\eqref{eq:mon1}
and~\eqref{eq:mon2}.
\end{note}
\begin{proof}
  The only relation for which the claim is not immediate is
$\succ_{lex}$. We only prove~\eqref{eq:mon1}
for $\succ_{lex}$; the remaining proof is analogous.

Let arbitrary multisets of non-negative reals
$a,b,c,d$ be given. We
define, with $e$ denoting any sequence or multiset of non-negative reals,
\begin{align*}
  len(e) &:=\text{the number of elements in $e$},\\
  \mu &:=\text{$(a\mcup b\mcup c\mcup d)^*$ with all duplicates removed},\\
  \nu(x,e)&:=\text{the number of occurrences of $x$ in $e$},\\
  \beta &:= 1+\max_{k=1}^{len(\mu)}\{\nu(\mu_k,a\mcup b\mcup c\mcup d)\},\\
  \#(e) &:= \sum_{k=1}^{len(\mu)}\nu(\mu_k,e)\cdot\beta^{-k}.
\end{align*}

So $\mu$ is the sequence of all distinct reals used in $a\mcup b\mcup c\mcup d$,
arranged in a decreasing order.
The function $\#(\cdot)$ injectively maps a multiset $e$ to a real number $y$ in such a way
that in the floating point representation of $y$ with base $\beta$, the
$k$th digit after the point equals the number of occurrences of the
$k$th biggest number $\mu_k$ in $e$. The base $\beta$ is chosen in such a way
that even if $e$ is the union of some of the given multisets, the
number $\nu(x,e)$ of occurrences of $x$ in $e$ never exceeds
$\beta-1$.
%%% This is too cumbersome, so I removed it again:
%This implies that $e^*>_{lex}f^*$ iff $\#(e)>\#(f)$, and
%$\#(e \cup f) = \#(e) + \#(f)$, where $e$ and $f$ represent (unions
%of) any of the given sequences, each of which may, however, only be
%used either for $e$ or for $f$.
Therefore, the following sequence
of implications holds:

\begin{align*}
  a^*>_{lex}b^*\text{ and }c^*>_{lex}d^*
  &\Rightarrow \#(a)>\#(b)\text{ and }\#(c)>\#(d)\\
  &\Rightarrow \#(a)+\#(c)>\#(b)+\#(d)\\
  &\Rightarrow \#(a\mcup c)>\#(b\mcup d)\\
  &\Rightarrow (a\mcup c)^*>_{lex}(b\mcup d)^*
\end{align*}
\end{proof}

Consequently, the corresponding three relations on collections induced by
(\ref{eq:prefto}) are all semi-linear comparison relations.

As a natural example of an irreflexive and transitive relation on multisets of reals that does not satisfy
the monotonicity condition (\ref{eq:mon1}) consider $\succ_{av}$ defined by
\[
\mbox{
$a \succ_{av} b$ iff $(\sum_{i=1}^{m} a_i)/m > (\sum_{j=1}^{n} b_j)/n$.
}
\]
Note that for
\[
a := \mso3\msc, b := \mso2,2,2,2\msc, c := \mso1,1,1,1\msc, d := \mso0\msc
\]
we have both $a \succ_{av} b$ and $c \succ_{av} d$ but not
$a \mcup c \succ_{av} b \mcup d$ since $\mso3,1,1,1,1\msc\succ_{av} \mso2,2,2,2,0\msc$
does not hold.

Further, the following natural irreflexive and transitive relations on multisets of reals do not satisfy
the monotonicity condition (\ref{eq:mon2}):

\begin{itemize}

\item the \emph{elitist} order:

$a \succ_{el} b$ iff $max(a) > max(b)$,

\item the \emph{egalitarian} order:

$a \succ_{eg} b$ iff $min(a) > min(b)$,

\end{itemize}

Indeed, we have both $\mso2\msc \succ_{el} \mso1\msc$ and $\mso2\msc \succ_{eg} \mso1\msc$, but neither
$\mso3,2\msc \succ_{el} \mso3,2\msc$ nor $\mso1,0\msc \succ_{eg} \mso1,0\msc$ holds.

\section{Individual values}
\label{sec:ind}

In the previous section we defined the comparison relations in the
context of TU-games by comparing the values (yielded by the $v$
function) of whole coalitions. Alternatively, we could compare payoffs
to individual players. The idea is that in the end, the value secured
by a coalition may have to be distributed to its members, and this
final payoff to a player may determine his preferences.

To formalize this approach we need the notion of an \emph{individual value function} $\phi$ that,
given the $v$ function of a TU-game and a coalition $A$, assigns to
each player $i \in A$ a real value $\phi^v_i(A)$. We assume
that $\phi$ is \emph{efficient}, i.e.~that it exactly distributes
the coalition's value to its members:
\begin{equation*}
  \sum_{i \in A} \phi^v_i(A) = v(A).
\end{equation*}

For a collection $C := \{C_1, \dots, C_k\}$, we put
\begin{equation*}
  \phi^v(C) := \mso \phi^v_i(A) \mid A\in C, i \in A\msc.
\end{equation*}

Given two collections $C=\{C_1,\dots,C_k\}$ and
$C'=\{C'_1,\dots,C'_l\}$ with $\bigcup C=\bigcup C'$, the comparison
relations now compare $\phi^v(C)$ and $\phi^v(C')$, which are
multisets of $|\bigcup C|$ real numbers, one
number for each player. In this way it is guaranteed that the
comparison relations are \emph{anonymous} in the sense that the names
of the players do not play a role.

In this section, to distinguish between comparison relations defined
only by means of $v$ and those defined using both $v$ and $\phi$, we
denote the former by $\prefto_v$ and the latter by $\prefto_\phi$.

We now examine how these two different approaches for
defining comparison relations relate. To this end,
we will clarify when they coincide, i.e.~when given a comparison
relation defined in one way, we can also obtain it using the other
way, and when they are unrelated.
We begin by formalizing the concept of anonymity.

\begin{definition}
Assume a coalitional TU-game $(N,v)$.

  \begin{itemize}
  \item An individual value function $\phi$ is \emph{anonymous} if for all $v$, permutations $\pi$ of $N$,
  $i\in N$, and $A\subseteq N$
  \begin{equation*}
    \phi^v_i(A)=\phi^{v\circ\pi^{-1}}_{\pi(i)}(\pi(A)).
  \end{equation*}

\item   $v$ is \emph{anonymous} if for all permutations $\pi$ of $N$ and
  $A\subseteq N$
  \begin{equation*}
    v(A) = v(\pi(A)).
  \end{equation*}
  \end{itemize}
\end{definition}

Note that for all $A$ we have $(v\circ\pi^{-1})(\pi(A)) = v(A)$.
Intuitively, $\phi$ is anonymous if it does not depend on the names of
the players and $v$ is anonymous if it is defined only in terms of the
cardinality of the argument coalition.

The following simple observation holds.

\begin{note}
  For any $v$ and $\phi$, if $\prefto_v$ and $\prefto_\phi$ are the
  utilitarian order (as defined in Section~\ref{sec:TU}), then
  for all collections $C$ and $C'$, we have
  $\phi^v(C)\prefto_\phi\phi^v(C')$ iff $v(C)\prefto_vv(C')$.
\end{note}
\begin{proof}
  Immediate since
  \begin{equation*}
    \sum_{A\in C}v(A)=\sum_{A\in C}\sum_{i\in
      A}\phi^v_i(A)=\sum_{A\in C,i\in A}\phi^v_i(A).
  \end{equation*}
\end{proof}

For other orders discussed in Section \ref{sec:TU} no relation between 
$\prefto_v$ and $\prefto_\phi$ holds. In fact, we have the following results.

\begin{theorem}
  Given $v$ and $\prefto_v$, it is in general not possible to define
  an anonymous individual value function $\phi$ along with $\prefto_\phi$
  such that for all collections $C$ and $C'$, we have
  $\phi^v(C)\prefto_\phi\phi^v(C')$ iff $v(C)\prefto_vv(C')$. This
  holds even if we restrict ourselves to anonymous $v$.
\end{theorem}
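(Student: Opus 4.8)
The plan is to prove this impossibility result by exhibiting a single anonymous TU-game $(N,v)$ and a comparison relation $\prefto_v$ on collections for which no anonymous $\phi$ together with any $\prefto_\phi$ can reproduce the ordering induced by $\prefto_v$. Since the statement claims impossibility ``even if we restrict ourselves to anonymous $v$'', the witnessing $v$ should be anonymous, i.e.\ depend only on coalition cardinalities. A natural candidate for $\prefto_v$ is one of the non-utilitarian orders from Section~\ref{sec:TU}, say $\succ_{Nash}$ or $\succ_{lex}$, since the preceding Note shows the utilitarian case \emph{does} coincide, so the obstruction must come precisely from an order that is not additive over coalitions.

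First I would fix a small grand coalition $N$ and choose $v$ so that two different partitions $C$ and $C'$ of the same set are forced into a definite $\prefto_v$-relationship by the chosen order on the multisets $v(C),v(C')$. The key leverage is the following tension: anonymity of $\phi$ forces $\phi^v_i(A)$ to depend only on $i$'s role inside $A$ (and, since $v$ is anonymous, only on $|A|$), so the multiset $\phi^v(C)$ is determined entirely by the multiset of \emph{sizes} of the coalitions in $C$. I would then arrange two partitions whose coalitions have the \emph{same} multiset of sizes but whose $v$-value multisets differ under $\prefto_v$. Concretely, if $C$ and $C'$ consist of coalitions of identical cardinalities, then $\phi^v(C)$ and $\phi^v(C')$ are the \emph{same} multiset, so no $\prefto_\phi$ can distinguish them (by irreflexivity $\phi^v(C)\not\prefto_\phi\phi^v(C)$), whereas $v(C)\prefto_v v(C')$ can be made to hold strictly. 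This single pair already breaks the ``iff'' and yields the theorem.

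The main obstacle is ensuring such a pair of partitions actually exists for an anonymous $v$ under a non-utilitarian order: if $v$ depends only on cardinality and $C,C'$ have the same size-multiset, then $v(C)=v(C')$ as multisets, so $\prefto_v$ would be irreflexive-forced to \emph{also} not distinguish them — meaning the naive ``same sizes'' construction fails for anonymous $v$. So the real construction must be subtler: I would pick $C$ and $C'$ with \emph{different} size-multisets but such that $\phi^v(C)$ and $\phi^v(C')$ nonetheless coincide as multisets while $v(C),v(C')$ differ under the chosen order. For instance, with the egalitarian-type splitting of value, two coalitions of sizes $\{3\}$ versus $\{1,2\}$ on a $3$-element block can be tuned (via the anonymous values $v$ of the singleton, pair, and triple) so that the per-player payoff multisets agree even though $v$-value multisets do not, exploiting that $\phi$ redistributes equal shares.

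To carry this out rigorously I would: (i) reduce, via anonymity of $\phi$ and $v$, the quantity $\phi^v(C)$ to a function of the size-multiset of $C$ alone; (ii) select $N$, an anonymous $v$, and two partitions $C,C'$ with $\bigcup C=\bigcup C'$ achieving $\phi^v(C)=\phi^v(C')$ (as multisets) together with $v(C)\neq v(C')$; (iii) for the chosen non-utilitarian $\prefto_v$, verify $v(C)\prefto_v v(C')$ (a direct numerical check on a fixed tiny example); and (iv) conclude that any $\prefto_\phi$ satisfying the ``iff'' would need $\phi^v(C)\prefto_\phi\phi^v(C')$, i.e.\ $x\prefto_\phi x$ for a multiset $x$, contradicting irreflexivity. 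The hardest part is step~(ii): finding the explicit anonymous $v$ and the concrete partition pair that simultaneously collapse the $\phi$-multisets yet separate the $v$-multisets under a genuinely non-additive order; I expect this to require a short but careful search over a grand coalition with three or four players.
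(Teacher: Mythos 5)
Your proposal is correct and follows essentially the same route as the paper's proof: exploit that anonymity of $\phi$ (together with anonymity of $v$ and efficiency) forces equal per-capita payoffs, so the payoff multisets of two suitably chosen partitions coincide, while a non-additive order such as the Nash order strictly separates their $v$-value multisets, making the required ``iff'' contradictory. In fact the paper's witness is just the minimal instance of your recipe: $N=\{1,2\}$ with $v(\{1\})=v(\{2\})=1$, $v(\{1,2\})=2$ and the Nash order, where $\phi^v(\{\{1,2\}\})=\mso1,1\msc=\phi^v(\{\{1\},\{2\}\})$ yet $\mso2\msc\prefto_v\mso1,1\msc$ holds in only one direction, so the ``careful search'' you anticipated is unnecessary.
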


\begin{proof}
Consider the following game with $N=\{1,2\}$:
  \begin{align*}
    v(\{1\})&:=1 & v(\{2\})&:=1 & v(\{1,2\})&:=2,
  \end{align*}
  and take $\prefto_v$ to be the Nash order as defined in
  Section~\ref{sec:TU}. This yields both
  \begin{equation*}
    v(\{\{1,2\}\})=\mso2\msc\prefto_v\mso1,1\msc=v(\{\{1\},\{2\}\})
  \end{equation*}
  and
  \begin{equation*}
    v(\{\{1\},\{2\}\})\nprefto_vv(\{\{1,2\}\}).
  \end{equation*}
  However, the symmetry of the game and anonymity of $\phi$ forces
  \begin{equation*}
    \phi^v(\{\{1,2\}\})=\mso1,1\msc=\phi^v(\{\{1\},\{2\}\}),
  \end{equation*}
  so we have either
  \begin{equation*}
    \phi^v(\{\{1,2\}\})\prefto_\phi\phi^v(\{\{1\},\{2\}\})\text{ and }
    \phi^v(\{\{1\},\{2\}\})\prefto_\phi\phi^v(\{\{1,2\}\})
  \end{equation*}
  or
  \begin{equation*}
    \phi^v(\{\{1,2\}\})\nprefto_\phi\phi^v(\{\{1\},\{2\}\})\text{ and }
    \phi^v(\{\{1\},\{2\}\})\nprefto_\phi\phi^v(\{\{1,2\}\}).
  \end{equation*}
\end{proof}

\begin{theorem}
  Given $v, \: \phi$ and $\prefto_\phi$, it is in general not possible
  to define $\prefto_v$ such that for all collections $C$ and $C'$, we
  have $v(C)\prefto_vv(C')$ iff $\phi^v(C)\prefto_\phi\phi^v(C')$.
  This holds even if we restrict ourselves to anonymous $v$,
  anonymous $\phi$, and a Nash or leximin order
  (as defined in Section~\ref{sec:TU}) for $\prefto_\phi$.
\end{theorem}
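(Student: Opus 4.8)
The plan is to argue by contradiction, exploiting that the individual-payoff multiset $\phi^v(C)$ can carry strictly more information than the coalition-value multiset $v(C)$. Suppose such a $\prefto_v$ existed. Since $\prefto_v$ by definition compares collections only through the multisets $v(C)$ and $v(C')$, the crucial observation is that whenever $v(C)=v(C')$ as multisets, the two statements $v(C)\prefto_v v(C')$ and $v(C')\prefto_v v(C)$ are literally the same proposition (the relation $\prefto_v$ applied to the identical pair of multisets), and hence have the same truth value. The strategy is therefore to produce, under all the stated restrictions, two collections $C,C'$ of the same player set with $v(C)=v(C')$ but with \emph{distinct} payoff multisets $\phi^v(C),\phi^v(C')$, so that the strict linear order $\prefto_\phi$ compares the latter asymmetrically. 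The assumed equivalence would then force $v(C)\prefto_v v(C')$ and $v(C')\prefto_v v(C)$ to differ in truth value, a contradiction.

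First I would pin down the obstruction that makes the naive attempt fail, which I expect to be the main conceptual hurdle: if $v$ is anonymous and one takes two collections of the \emph{same} shape, their $v$-multisets coincide, but then an anonymous $\phi$ assigns them the same payoff multiset too, so nothing is gained. The resolution is to choose an anonymous $v$ that takes \emph{equal} values on coalitions of \emph{different} sizes, so that two collections of different shapes have the same $v$-multiset while an anonymous, efficient $\phi$ still distributes their value differently. The cleanest such choice is the constant game $v(A):=1$ for every nonempty $A$ (with $v(\emptyset)=0$), which is anonymous, together with the equal-split value $\phi^v_i(A):=v(A)/|A|$, which is efficient and anonymous for every $v$.

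Next I would instantiate $N=\{1,2,3,4\}$ and compare
\[
C:=\{\{1,2\},\{3,4\}\}\quad\text{and}\quad C':=\{\{1,2,3\},\{4\}\}.
\]
Both are partitions of $N$ into two coalitions, so $v(C)=\mso1,1\msc=v(C')$, whereas the equal split yields $\phi^v(C)=\mso\tfrac12,\tfrac12,\tfrac12,\tfrac12\msc$ and $\phi^v(C')=\mso1,\tfrac13,\tfrac13,\tfrac13\msc$, which are distinct multisets. A short computation then exhibits the asymmetry for each target order: for the Nash order the products are $1/16>1/27$, giving $\phi^v(C)\prefto_\phi\phi^v(C')$ and $\phi^v(C')\nprefto_\phi\phi^v(C)$; for the leximin order the decreasingly sorted sequences satisfy $(1,\tfrac13,\tfrac13,\tfrac13)>_{lex}(\tfrac12,\tfrac12,\tfrac12,\tfrac12)$, giving the reverse strict comparison. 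In either case exactly one of the two $\prefto_\phi$-comparisons holds.

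Finally I would close the argument: since $v(C)=v(C')$, the assumed equivalence makes $v(C)\prefto_v v(C')$ equivalent to the comparison that holds and $v(C')\prefto_v v(C)$ equivalent to the comparison that fails, yet these two $\prefto_v$-statements are one and the same proposition — the desired contradiction. The only points needing care are checking that the constant $v$ and the equal-split $\phi$ genuinely satisfy the anonymity and efficiency requirements and verifying the two order computations; both are routine.
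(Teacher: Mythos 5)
Your proof is correct and takes essentially the same route as the paper's: a constant anonymous $v$ with the equal-split value $\phi^v_i(A)=v(A)/|A|$ on $N=\{1,2,3,4\}$, and two partitions of different shapes (your $2{+}2$ vs.\ $3{+}1$; the paper's $1{+}3$ vs.\ $2{+}2$ with $v\equiv 6$) whose $v$-multisets coincide while Nash and leximin each compare the payoff multisets asymmetrically. The paper leaves the final ``same proposition'' contradiction implicit, whereas you spell it out, but the argument is identical in substance.
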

\begin{proof}
  Consider  $N=\{1,\dots,4\}$ and
  \begin{align*}
    v(A)&:=6\text{ for all $A\subseteq N$}\\
    \phi^v_i(A)&:=\frac{v(A)}{|A|}.
  \end{align*}
  Then we have
  \begin{align*}
    \phi^v(\{\{1\},\{2,3,4\}\})&=\mso6,2,2,2\msc\\
    \phi^v(\{\{1,2\},\{3,4\}\})&=\mso3,3,3,3\msc,
  \end{align*}
  which are distinguished by each of the mentioned $\prefto_\phi$, while
  \begin{align*}
    v(\{\{1\},\{2,3,4\}\})&=v(\{\{1,2\},\{3,4\}\})=\mso6,6\msc.
  \end{align*}
%Consider  $N=\{1,\dots,7\}$ and define
%  \begin{align*}
%    v(A)&:=
%    \begin{cases}
%      12 & \text{if $|A|\geq3$}\\
%      2 & \text{otherwise} \\
%    \end{cases} \\
%    \phi^v_i(A)&:=\frac{v(A)}{|A|}\\
%  \end{align*}
%and take $\prefto_{\phi}$ to be the Nash order.   Then
%  \begin{align*}
%    \phi^v(\{\{1\},\{2,3\},\{4,5,6,7\}\})&=\mso2,1,1,3,3,3,3\msc,\\
%    \phi^v(\{\{1,2\},\{3,4\},\{5,6,7\}\})&=\mso1,1,1,1,4,4,4\msc. \\
%    \intertext{But $2 \cdot 3^4 > 4^3$, so}
%    \phi^v(\{\{1\},\{2,3\},\{4,5,6,7\}\})&\prefto_\phi\phi^v(\{\{1,2\},\{3,4\},\{5,6,7\}\}),\\
%    \phi^v(\{\{1,2\},\{3,4\},\{5,6,7\}\})&\nprefto_\phi\phi^v(\{\{1\},\{2,3\},\{4,5,6,7\}\}),\\
%    \intertext{while}
%    v(\{\{1,2\},\{3,4\},\{5,6,7\}\})&=\mso2,2,12\msc=v(\{\{1\},\{2,3\},\{4,5,6,7\}\}).
%  \end{align*}
\end{proof}

These results suggest that the two approaches for defining
preference relations are fundamentally different and coincide only for
the utilitarian order.

In the case of individual values we can introduce natural orders that
have no counterpart for the comparison relations defined
only by means of $v$.  The reason is that for each partition $P$,
$\phi^v(P)$ can be alternatively viewed as a sequence (of payoffs) of
(the same) length $n$. Such sequences can then be compared using

\begin{itemize}

\item the \emph{majority order}:

$(k_1, \LL, k_n)  \succ_{m} (l_1, \LL, l_n)$ iff $|\C{i \mid k_i > l_i}| > |\C{i \mid l_i > k_i}|$,

\item the \emph{Pareto order}:

$(k_1, \LL, k_n)  \succ_{p} (l_1, \LL, l_n)$ iff 

$\fa i \in \{1,\ldots ,n\} \: k_i \geq l_i$ 
and $\te i \in \{1,\ldots ,n\} \: k_i > l_i$.

\end{itemize}

The relation $\succ_{m}$ is clearly irreflexive and monotonic both in
sense~\eqref{eq:mon1} and~\eqref{eq:mon2}. Unfortunately, it is not
transitive. Indeed, we have both $(2, 3, 0) \succ_m (1,2,2)$ and
$(1,2,2) \succ_m (3,1,1)$, but $(2, 3, 0) \succ_m (3,1,1)$ does not
hold.  In contrast, the relation $\succ_{p}$ is transitive, irreflexive,
monotonic both in sense~\eqref{eq:mon1} and~\eqref{eq:mon2}.

\section{Stable partitions}
\label{sec:stable}

We now return to our analysis of partitions.  One way to identify
conditions guaranteeing the unique outcome of the iterations of the
merge and split rules is through focusing on the properties of such a
unique outcome.  This brings us to the concept of a stable partition.

We follow here the approach of \cite{AR06}, although now no notion of a
game is present.  The introduced notion is parametrized by means of a
\emph{defection function} \defect{} that assigns to each partition
some partitioned subsets of the grand coalition.  Intuitively, given a
partition $P$ the family $\mathbb{D}(P)$ consists of all the
collections $C := \{C_1,\ldots, C_l\}$ whose players can leave the
partition $P$ by forming a new, separate, group of players
$\cup_{j=1}^l C_j$ divided according to the collection $C$.  Two most
natural defection functions are $\mathbb{D}_p$, which allows formation
of all partitions of the grand coalition, and $\mathbb{D}_c$, which
allows formation of all collections in the grand coalition.

Next, given a collection $C$ and a partition $P := \{P_1,\ldots, P_k\}$ we define
\[
C[P] := \{P_1 \cap \bigcup C, \ldots,P_k \cap \bigcup C\} \setminus \{\ES\}
\]
and call $C[P]$
the \textit{collection} $C$ \textit{in the frame of} $P$.
(By removing the empty set we ensure that $C[P]$ is a collection.)
To clarify this concept consider Figure \ref{fig:coa}. We
depict in it a collection $C$, a partition $P$ and $C$ in the frame of
$P$ (together with $P$). Here $C$ consists of four coalitions, while
$C$ in the frame of $P$ consists of three coalitions.

\begin{figure}[htbp]
\centering%
\beginpgfgraphicnamed{coll-frame}%
\begin{tikzpicture}[thick,scale=0.75,transform shape]
  \begin{scope}[shape=ellipse,minimum width=6cm,minimum height=2cm]
    \draw (0,0) node[draw] (P) {} ;
    \draw (0,-3) node[draw] (P2) {} ;
  \end{scope}

  \begin{scope}[shape=ellipse,minimum width=3cm,minimum
    height=0.9cm,fill opacity=0.3,fill=lightgray]
    \draw (-0.1,2.5) node[draw,fill] (C) {} ;
    \draw (-0.1,-3) node[draw,fill] (C2) {} ;
  \end{scope}

  \draw (C.15) -- (C.165) ;
  \draw (C.5) -- (C.175) ;
  \draw (C.-10) -- (C.190) ;

  \draw (P.north west) -- (P.south west) ;
  \draw (P.140) -- (P.-140) ;
  \draw (P.95) -- (P.-95) ;
  \draw (P.30) -- (P.-30) ;

  \draw (P2.north west) -- (P2.south west) ;
  \draw (P2.140) -- (P2.-140) ;
  \draw (P2.95) -- (P2.-95) ;
  \draw (P2.30) -- (P2.-30) ;

  \node[below] at (C.south) {Collection $C$} ;
  \node[below] at (P.south) {Partition $P$} ;
  \node[below] at (P2.south) {$C[P]$} ;
\end{tikzpicture}%
\endpgfgraphicnamed%
\caption{A collection $C$ in the frame of a partition $P$}
\label{fig:coa}
\end{figure}
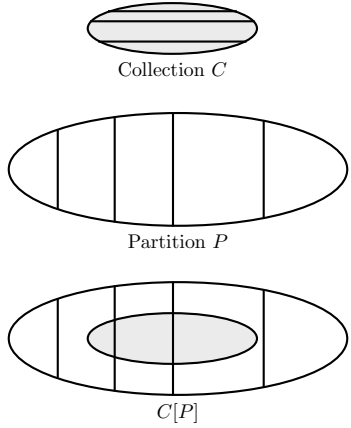

Intuitively, given a subset $S$ of $N$ and a partition $C :=
\{C_1,\ldots, C_l\}$ of $S$, the collection $C$ offers the players
from $S$ the `benefits' resulting from the partition of $S$ by $C$.
However, if a partition $P$ of $N$ is `in force', then the players
from $S$ enjoy instead the benefits resulting from the partition of
$S$ by $C[P]$, i.e.~$C$ in the frame of $P$.

To get familiar with the $C[P]$ notation note that

\begin{itemize}
\item if $C$
is a singleton, say $C = \{T\}$, then
$\{T\}[P] = \{P_1 \cap  T, \LL, P_k \cap T\} \setminus \{\ES\}$,
where $P = \{P_1, \LL, P_k\}$,

\item if $C$ is a partition of $N$, then $C[P] = P$,

\item if $C \sse P$,
that is $C$ consists of some coalitions of $P$, then $C[P] = C$.
\end{itemize}

In general the following simple observation holds.

\begin{note} \label{not:cp}
For a collection $C$ and a partition $P$, $C[P] = C$ iff
each element of $C$ is a subset of a different element of $P$.
  \HB
\end{note}

%The point is that if the players in C only care about the way they are
%divided among themselves then defecting won't change this division.

%In the definition of stability that follows, in view of the last
%note, we focus on the collections that are not subsets of a
%partition $P$.

This brings us to the following notion.

\begin{definition}
Assume a defection function $\defect$ and a comparison relation
$\prefto$.
We call a partition $P$ \defect-stable if $C[P]\prefto C$
%   \begin{enumerate}[(i)]
%   \item \defect-stable if $C[P]\prefto C$,
%   \item weakly \defect-stable if $C[P]\succeq C$,
%   \item strictly \defect-stable if $C[P]\succ C$,
%   \end{enumerate}
for all $C\in\defect(P)$ such that  $C[P] \neq C$.
\end{definition}

The last qualification, that is  $C[P] \neq C$, requires some
explanation. Intuitively, this condition indicates that the players
only care about the way they are partitioned.  Indeed, if $C[P] = C$,
then the partitions of $\bigcup C$ by means of $P$ and by means of $C$
coincide and are viewed as equally satisfactory for the players in
$\bigcup C$.  By disregarding the situations in which $C[P] = C$ we
therefore adopt a limited viewpoint of cooperation according to which
the players in $C$ do not care about the presence of the players from
outside of $\bigcup C$ in their coalitions.

The following observation holds, where we call a partition $P$ of $N$
\emph{$\prefto$-maximal} if for all partitions $P'$ of $N$ different from $P$, $P \prefto P'$ holds.

\begin{theorem} \label{thm:Dp}
A partition of $N$ is $\mathbb{D}_p$-stable iff it is
$\prefto$-maximal.

In particular, a $\mathbb{D}_p$-stable partition of $N$ exists if
$\prefto$ is a semi-linear comparison relation.
\HB
\end{theorem}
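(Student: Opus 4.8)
The plan is to reduce the stability condition to the maximality condition by exploiting the special shape of $\mathbb{D}_p$. Recall that $\mathbb{D}_p(P)$ consists of all partitions of the grand coalition $N$, so every $C\in\mathbb{D}_p(P)$ satisfies $\bigcup C=N=\bigcup P$ and is therefore legitimately comparable to $P$ via $\prefto$. The crucial observation, already recorded among the bullet points in the excerpt, is that whenever $C$ is a partition of $N$ we have $C[P]=P$. So I would first instantiate the definition of $\mathbb{D}_p$-stability: $P$ is $\mathbb{D}_p$-stable iff $C[P]\prefto C$ for every partition $C$ of $N$ with $C[P]\neq C$. Substituting $C[P]=P$ turns the side condition $C[P]\neq C$ into $P\neq C$ and the inequality $C[P]\prefto C$ into $P\prefto C$. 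Hence $\mathbb{D}_p$-stability of $P$ says precisely that $P\prefto C$ for every partition $C\neq P$ of $N$, which is exactly the definition of $\prefto$-maximality. This settles the equivalence in both directions at once.

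For the existence claim I would argue by finiteness. The set of all partitions of $N$ is finite and non-empty, and they all share the same underlying union, namely $N$. If $\prefto$ is semi-linear, then any two distinct such partitions are comparable, so $\prefto$ restricted to partitions of $N$ is total; together with irreflexivity and transitivity this makes it a strict linear order on a finite set. Such an order has a greatest (most preferred) element $P$, i.e.\ one with $P\prefto C$ for every other partition $C$ of $N$. By definition $P$ is $\prefto$-maximal, and by the first part it is $\mathbb{D}_p$-stable.

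I expect no serious obstacle here: the content is essentially the identity $C[P]=P$ for partitions $C$ of $N$, after which both halves are immediate. The only points requiring care are bookkeeping ones --- checking that the side condition $C[P]\neq C$ collapses cleanly to $P\neq C$, so that no spurious cases survive, and confirming that the maximality direction matches the order's greatest element (most preferred) rather than its least. For the existence part I would make sure the strict-linear-order-on-a-finite-set argument is spelled out, since semi-linearity alone only yields comparability of distinct elements; it is the combination with irreflexivity and transitivity that genuinely produces a maximum.
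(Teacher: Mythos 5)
Your proof is correct and follows essentially the same route as the paper: the key step in both is the identity $C[P]=P$ for partitions $C$ of $N$, which collapses the stability condition directly to $\prefto$-maximality. The only difference is that you spell out the existence claim (finiteness plus semi-linearity, irreflexivity and transitivity yielding a greatest element), which the paper treats as immediate and leaves implicit.
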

\begin{proof}
Note that if $C$ is a partition of $N$,
then $C[P] \neq C$ is equivalent to the statement $P \neq C$, since
then $C[P] = P$. 
So a partition $P$ of $N$ is $\mathbb{D}_p$-stable iff
for all partitions $P'\neq P$ of $N$, $P \prefto P'$ holds.
\end{proof}
\VV

In contrast, $\mathbb{D}_c$-stable partitions do not need to exist even
if the comparison relation $\prefto$ is semi-linear. 

\begin{example}
Consider $N=\{1,2,3\}$ and any semi-linear
comparison relation $\prefto$ such that
$\{\{1,2,3\}\} \prefto\{\{1\},\{2\},\{3\}\}$ and
$\{\{a\},\{b\}\}\prefto\{\{a,b\}\}$
for all $a, b \in \{1,2,3\}$, $a \neq b$.

% \begin{gather*}
%   \{\{1,2,3\}\}\prefto\{\{1,2\},\{3\}\}\prefto\{\{1\},\{2,3\}\}
%   \prefto\{\{1,3\},\{2\}\}\prefto\{\{1\},\{2\},\{3\}\};\\
%   \{\{1\},\{2\}\}\prefto\{\{1,2\}\};
%   \qquad\{\{1\},\{2\}\}\prefto\{\{1,2\}\};
%   \qquad\{\{1\},\{2\}\}\prefto\{\{1,2\}\}.
% \end{gather*}

Then no partition of $N$ is $\mathbb{D}_c$-stable.
Indeed, $P := \{\{1\},\{2\},\{3\}\}$ is not $\mathbb{D}_c$-stable since
for $C := \{\{1,2,3\}\}$ we have
$C[P] = \{\{1\},\{2\},\{3\}\}\nprefto\{\{1,2,3\}\} = C$.
Further, any other
partition $P$ contains some coalition $\{a,b\}$ and is thus not $\mathbb{D}_c$-stable 
either since then for $C := \{\{a\},\{b\}\}$ we have
\[
C[P] = \{\{a,b\}\}\nprefto\{\{a\},\{b\}\} = C.
\]
\HB
\end{example}

In \cite{AR06} another example is given for the case of TU-games and utilitarian order.
More precisely, a TU-game is defined in which no
$\mathbb{D}_c$-stable partition exists, where $\prefto$ is defined through
(\ref{eq:prefto}) using the utilitarian order $\succ_{ut}$.

\section{Stable partitions and merge/split rules}
\label{sec:unique}

We now resume our investigation of the conditions under which every
iteration of the merge and split rules yields the same outcome.  
To establish the main theorem of the paper and provide an answer in terms of the
$\mathbb{D}_c$-stable partitions, we first present the following three lemmata about
$\defect_c$-stable partitions.

\begin{lemma}
  \label{not:stable}
Every $\defect_c$-stable partition is closed under the applications of
the merge and split rules.
\end{lemma}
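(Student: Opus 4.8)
The plan is to show that a $\defect_c$-stable partition $P$ admits no applicable merge or split rule, so that it is a normal form and hence trivially closed under both. I would argue by contradiction in each case: from a hypothetically applicable rule I construct a single collection $C$ whose frame $C[P]$ witnesses the $\defect_c$-stability condition in the wrong direction, and then derive a clash with transitivity and irreflexivity of $\prefto$. Note that since $\defect_c(P)$ consists of \emph{all} collections in $N$, every $C$ I build automatically lies in $\defect_c(P)$, so the stability condition is available for it.

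First I would handle the merge rule. Suppose a merge applies to $P$, so that $P$ contains coalitions $T_1,\dots,T_k$ (with $k\geq 2$ for a genuine merge) satisfying $\{\bigcup_{j=1}^k T_j\}\prefto\{T_1,\dots,T_k\}$. I take $C:=\{\bigcup_{j=1}^k T_j\}$, a one-coalition collection with $\bigcup C=\bigcup_{j=1}^k T_j$. Since each $T_j$ is a block of $P$ and distinct blocks are disjoint, intersecting $\bigcup C$ with the blocks of $P$ returns exactly $T_1,\dots,T_k$; hence $C[P]=\{T_1,\dots,T_k\}$. As $k\geq 2$, the single coalition $\bigcup_{j=1}^k T_j$ is not contained in one block of $P$, so $C[P]\neq C$ (consistent with Note~\ref{not:cp}). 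Stability then yields $C[P]\prefto C$, i.e.~$\{T_1,\dots,T_k\}\prefto\{\bigcup_{j=1}^k T_j\}$. Combined with the applicability condition $\{\bigcup_{j=1}^k T_j\}\prefto\{T_1,\dots,T_k\}$, transitivity gives $\{\bigcup_{j=1}^k T_j\}\prefto\{\bigcup_{j=1}^k T_j\}$, contradicting irreflexivity.

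The split rule is symmetric. Suppose a split applies, so $P$ contains a block $\bigcup_{j=1}^k T_j$ that is split into $T_1,\dots,T_k$ (again $k\geq 2$) with $\{T_1,\dots,T_k\}\prefto\{\bigcup_{j=1}^k T_j\}$. This time I take $C:=\{T_1,\dots,T_k\}$, so $\bigcup C=\bigcup_{j=1}^k T_j$ is itself one block of $P$; intersecting $\bigcup C$ with the blocks of $P$ therefore produces only that single block, giving $C[P]=\{\bigcup_{j=1}^k T_j\}$. The $T_j$ all sit inside the \emph{same} block of $P$, so by Note~\ref{not:cp} again $C[P]\neq C$. Stability yields $C[P]\prefto C$, i.e.~$\{\bigcup_{j=1}^k T_j\}\prefto\{T_1,\dots,T_k\}$, which together with the applicability condition and the identical transitivity/irreflexivity step gives a contradiction.

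Since neither rule can fire, $P$ is a normal form, so the claim of closure holds vacuously. I do not expect a real obstacle here; the only step requiring care is the bookkeeping that the chosen collection $C$ has exactly the intended frame $C[P]$, and this follows immediately from the pairwise disjointness of the blocks of $P$. The conceptual heart of the argument is simply that the two sides of an applicable merge (resp.~split) are precisely the pair $C$ and $C[P]$ for a suitable $C$, so the rule's side condition and the stability condition are direct opposites and cannot both hold.
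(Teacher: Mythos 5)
Your proof is correct and takes essentially the same route as the paper: for each rule you choose the same witness collection $C$ (the merged coalition for merge, the set of parts for split), invoke $\defect_c$-stability to get $C[P]\prefto C$, and contradict the rule's applicability condition via transitivity and irreflexivity of $\prefto$. The only difference is that you spell out the split case and explicitly verify $C[P]\neq C$, both of which the paper leaves implicit ("shown analogously").
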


\begin{proof}
To prove the closure of a $\defect_c$-stable partition $P$
under the merge rule assume that for some $\{T_1,\dots,T_k\}\subseteq P$ we have
  $\{\bigcup_{j=1}^k  T_j\} \prefto\{T_1,\dots,T_k\}$.
$\defect_c$-stability of $P$ with
$C := \{\bigcup_{j=1}^k  T_j\}$
yields
  \begin{equation*}
    \{T_1,\dots,T_k\}= \{\bigcup_{j=1}^k
      T_j\}[P]\prefto \{\bigcup_{j=1}^k T_j\},
  \end{equation*}
  which is a contradiction by virtue of the transitivity and
  irreflexivity of $\prefto$.

The closure under the split rule is shown analogously.
\end{proof}

Next, we provide a characterization of 
$\defect_c$-stable partitions. Given a
partition $P := \C{P_1, \ldots, P_k}$ we call here a coalition $T$
$P$-\textit{compatible} if for some $i \in \{1, \ldots, k\}$ we have
$T \sse P_i$ and $P$-\textit{incompatible} otherwise.

\begin{lemma}
  \label{theorem:3.1}
A partition $P=\{P_1,\dots,P_k\}$ of $N$ is $\defect_c$-stable
  iff the following two conditions are satisfied (see
  Figure~\ref{fig:comp-incomp} for an illustration of the following coalitions):
  \begin{enumerate}[(i)]
  \item for each $i\in\{1,\dots,k\}$ and each pair of disjoint
    coalitions $A$ and $B$ such that $A\cup B\subseteq P_i$
    \begin{equation}
      \label{eq:3.1:3}
      \{A\cup B\}\prefto\{A,B\},
    \end{equation}
  \item for each $P$-incompatible coalition $T\subseteq N$
    \begin{equation}
      \label{eq:3.1:4}
      \{T\}[P]\prefto\{T\}.
    \end{equation}
  \end{enumerate}
\end{lemma}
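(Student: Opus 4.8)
The plan is to prove the two implications separately: to derive the local conditions \eqref{eq:3.1:3} and \eqref{eq:3.1:4} from $\defect_c$-stability by instantiating the stability requirement at cleverly chosen defecting collections, and conversely to build global $\defect_c$-stability out of these two local conditions by a split-then-merge decomposition combined with the monotonicity laws \eqref{eq:mon1} and \eqref{eq:mon2}.

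For the forward direction, suppose $P$ is $\defect_c$-stable. To obtain \eqref{eq:3.1:3}, given disjoint $A,B$ with $A\cup B\subseteq P_i$ I would take $C:=\{A,B\}$. Since $\bigcup C=A\cup B\subseteq P_i$, a direct computation gives $C[P]=\{A\cup B\}$, which differs from $C$ (one block versus two), so stability yields $\{A\cup B\}=C[P]\prefto C=\{A,B\}$. To obtain \eqref{eq:3.1:4}, given a $P$-incompatible $T$ I would take $C:=\{T\}$; by Note~\ref{not:cp}, $C[P]\neq C$ precisely because $T$ is not contained in a single block of $P$, and stability then gives $\{T\}[P]\prefto\{T\}$ directly.

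For the converse, assume \eqref{eq:3.1:3} and \eqref{eq:3.1:4} and let $C=\{C_1,\dots,C_m\}$ be any collection with $C[P]\neq C$; the goal is $C[P]\prefto C$. The key idea is to route from $C$ to $C[P]$ through the common refinement $C':=\{P_i\cap C_j : P_i\cap C_j\neq\emptyset\}$. Passing from $C$ to $C'$ splits each $C_j$ into its $P$-pieces $C_j[P]$: by \eqref{eq:3.1:4} each $P$-incompatible $C_j$ contributes a strict improvement $C_j[P]\prefto\{C_j\}$, while each compatible $C_j$ contributes an equality. Passing from $C'$ to $C[P]$ merges, within each block $P_i$, the pieces $P_i\cap C_j$ into $P_i\cap\bigcup C$: iterating \eqref{eq:3.1:3} shows that whenever two or more pieces land in the same $P_i$ this is a strict improvement $\{P_i\cap\bigcup C\}\prefto\{P_i\cap C_j : P_i\cap C_j\neq\emptyset\}$, and otherwise an equality. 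To assemble these local facts I would first record a small combination lemma: if for each $t$ either $A_t\prefto B_t$ or $A_t=B_t$, with the $\bigcup A_t$ pairwise disjoint and at least one step strict, then $\bigcup_t A_t\prefto\bigcup_t B_t$; this follows by applying \eqref{eq:mon1} across the strict indices and then absorbing the equal indices with \eqref{eq:mon2}. Applying this lemma over the index $j$ gives $C'\prefto C$ or $C'=C$, and over the index $i$ gives $C[P]\prefto C'$ or $C[P]=C'$. Since $C[P]\neq C$, at least one of the two steps must be strict, so transitivity delivers $C[P]\prefto C$.

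The main obstacle is the converse, and within it the careful bookkeeping rather than any single deep idea. One must check that $C'$ is genuinely a collection (its pieces are pairwise disjoint because both the $P_i$ and the $C_j$ are), verify all the disjointness side-conditions needed to invoke \eqref{eq:mon1} and \eqref{eq:mon2} at each combination step, and---most importantly---argue that the hypothesis $C[P]\neq C$ forces strictness somewhere. This last point matters because monotonicity alone only tells us that each consecutive pair among $C[P]$, $C'$, $C$ is related by $\prefto$ or is equal, and transitivity of the irreflexive relation $\prefto$ needs at least one strict link before it can conclude $C[P]\prefto C$.
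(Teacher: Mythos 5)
Your proof is correct and takes essentially the same route as the paper's: the forward direction instantiates stability at $C=\{A,B\}$ and $C=\{T\}$ exactly as the paper does, and your common refinement $C'$ is precisely the paper's intermediate collection $\bigcup_{i=1}^k(D^i\cup E^i)$, with your split and merge steps corresponding to the paper's use of \eqref{eq:3.1:4} and of the iterated form \eqref{eq:3.1:5} of \eqref{eq:3.1:3} via the monotonicity laws. The only difference is organizational: you package the strict-versus-equal bookkeeping into a general combination lemma and force strictness abstractly from $C[P]\neq C$, whereas the paper runs an explicit two-case analysis and invokes Note~\ref{not:cp} in the case where all merges are trivial.
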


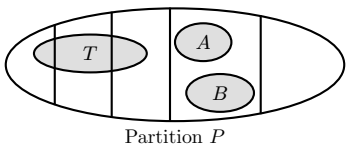
\begin{figure}[htbp]
  \centering%
  \beginpgfgraphicnamed{comp-incomp}%
  \begin{tikzpicture}[thick,scale=0.75,transform shape]
    \begin{scope}[shape=ellipse,minimum width=6cm,minimum height=2cm]
      \draw (0,0) node[draw] (P) {} ;
    \end{scope}
    
    \begin{scope}[shape=ellipse,fill opacity=0.25,fill=gray]
      \draw (.5,.4) node[minimum width=1cm,minimum height=0.6cm,draw,fill] (A) {$A$} ;
      \draw (.8,-.5) node[minimum width=1.2cm,minimum height=0.4cm,draw,fill] (B) {$B$} ;
      \draw (-1.5,.2) node[minimum width=2cm,minimum height=0.4cm,draw,fill] (T) {$T$} ;
    \end{scope}
    
    \draw (P.north west) -- (P.south west) ;
    \draw (P.140) -- (P.-140) ;
    \draw (P.95) -- (P.-95) ;
    \draw (P.30) -- (P.-30) ;
    
    \node[below] at (P.south) {Partition $P$} ;
    \node at (A.center) {$A$} ;
    \node at (B.center) {$B$} ;
    \node at (T.center) {$T$} ;
  \end{tikzpicture}%
  \endpgfgraphicnamed%
  \caption{$P$-compatible coalitions $A$ and $B$ and
    a $P$-incompatible coalition $T$ as in Lemma~\ref{theorem:3.1}}
  \label{fig:comp-incomp}
\end{figure}

\begin{proof}
  ($\Rightarrow$) It suffices to note that for $C = \{A, B\}$ we have
$C[P] = \{A \cup B\}$ and for $C = \{T\}$ we have  $\{T\}[P] \neq \{T\}$
by the $P$-incompatibility of  $T$. Then $(i)$ and $(ii)$ follow directly by the definition
of $\defect_c$-stability.
\\[2mm]
  \NI
($\Leftarrow$)
Transitivity, monotonicity \eqref{eq:mon2} and
  \eqref{eq:3.1:3} imply by induction that for each
  $i\in\{1,\dots,k\}$ and each collection $C=\{C_1,\dots,C_l\}$ with
  $l>1$ and $\bigcup C\subseteq P_i$,
  \begin{equation}
    \label{eq:3.1:5}
    \left\{\bigcup C\right\}\prefto C.
  \end{equation}

  Let now $C$ be an arbitrary collection in $N$ such that $C[P] \neq
  C$. We prove that $C[P]\prefto C$. Define
\[
\mbox{$D^i:=\{T\in C \mid T\subseteq P_i\}$,}
\]
\[
\mbox{$E:=C\setminus\bigcup_{i=1}^k D^i$,}
\]
\[
\mbox{$E^i:=\{P_i\cap T \mid T\in E\}\setminus\{\emptyset\}$. }
\]

Note that $D^i$ is the set of $P$-compatible elements of $C$
contained in $P_i$, $E$ is
the set of $P$-incompatible elements of $C$ and $E^i$ consists of the
non-empty intersections of $P$-incompatible elements of $C$ with $P_i$.

Suppose now that 
$\bigcup_{i=1}^k E^i \neq \ES$. Then $E \neq \ES$ and consequently
  \begin{equation}
    \bigcup_{i=1}^kE^i
    =\bigcup_{i=1}^k\left(\{P_i\cap T \mid T\in E\}\setminus\{\emptyset\}\right)
    =\bigcup_{T\in E}\left(\{T\}[P]\right)
    \overset{\eqref{eq:mon1},\eqref{eq:3.1:4}}{\prefto}E.
\label{eq:E}
  \end{equation}

Consider now the following property:

\begin{equation}
  \label{eq:prop}
|D^i\cup E^i|>1.
\end{equation}
Fix $i\in\{1,\dots,k\}$.
If (\ref{eq:prop}) holds, then 
  \begin{equation*}
\left\{P_i\cap\bigcup C\right\}=\left\{\bigcup (D^i \cup E^i)\right\}\overset{\eqref{eq:3.1:5}}{\prefto} D^i\cup E^i
  \end{equation*}
and otherwise
  \begin{equation*}
\left\{P_i\cap\bigcup C\right\}= \left\{D^i\cup E^i\right\}.
  \end{equation*}

Recall now that 
\begin{equation*}
    C[P]=\bigcup_{i=1}^k\left\{P_i\cap\bigcup C\right\}\setminus\{\emptyset\}.
\end{equation*}

We distinguish two cases.
\II

\NI
\emph{Case 1}. (\ref{eq:prop}) holds for some $i\in\{1,\dots,k\}$.

Then by \eqref{eq:mon1} and \eqref{eq:mon2}
  \begin{equation*}
    C[P] \prefto \bigcup_{i=1}^k(D^i\cup E^i)=(C\setminus E)\cup\bigcup_{i=1}^kE^i.
  \end{equation*}

If $\bigcup_{i=1}^kE^i=\emptyset$, then also $E=\emptyset$ and we 
get $C[P] \prefto C$. Otherwise by (\ref{eq:E}), transitivity and
\eqref{eq:mon2}
  \begin{equation*}
    C[P]\prefto(C\setminus E)\cup E=C.
  \end{equation*}

\NI
\emph{Case 2}. (\ref{eq:prop}) does not hold 
for any $i\in\{1,\dots,k\}$.

Then 
  \begin{equation*}
    C[P] = \bigcup_{i=1}^k(D^i\cup E^i)=(C\setminus E)\cup\bigcup_{i=1}^kE^i.
  \end{equation*}
  Moreover, because $C[P] \neq C$, by Note~\ref{not:cp} a
  $P$-incompatible element in $C$ exists. So $\bigcup_{i=1}^k E^i \neq
  \ES$ and by (\ref{eq:E}) and \eqref{eq:mon2} we get as before
  \begin{equation*}
    C[P]\prefto(C\setminus E)\cup E=C.
  \end{equation*}
\end{proof}

In \cite{AR06} the above characterization was proved for the coalitional
TU-games and the utilitarian order.
We shall now use it in the proof of the following lemma.

\begin{lemma}
  \label{lemma:unique}
  Assume that $P$   is $\defect_c$-stable.
  Let $P'$ be closed under applications of merge and split rules. Then
  $P'=P$.
\end{lemma}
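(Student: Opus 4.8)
The plan is to establish two refinement relations between $P$ and $P'$ and conclude that they coincide. Recall that since $P$ is $\defect_c$-stable, Lemma~\ref{theorem:3.1} supplies conditions~\eqref{eq:3.1:3} and~\eqref{eq:3.1:4}, and the latter yields by~\eqref{eq:3.1:5} that $\{\bigcup C\}\prefto C$ for every collection $C$ with $|C|>1$ whose union lies inside a single block of $P$. The closure of $P'$ means that neither rule fires on it: no sub-family of blocks of $P'$ may be merged, and no block of $P'$ may be split, with the resulting collection strictly preferred. Both $P$ and $P'$ partition the same grand coalition $N$, so it will suffice to show that each refines the other.

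First I would show that $P'$ refines $P$, i.e.~every block of $P'$ is $P$-compatible. Suppose to the contrary that some block $Q$ of $P'$ is $P$-incompatible. Then $\{Q\}[P]\neq\{Q\}$ consists of the (at least two) non-empty intersections $P_i\cap Q$, and condition~\eqref{eq:3.1:4} gives $\{Q\}[P]\prefto\{Q\}$. This is exactly an instance of the split rule applicable to $P'$, namely splitting $Q$ into $\{Q\}[P]$, contradicting the closure of $P'$. Hence every block of $P'$ is contained in some block of $P$.

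Next I would show that $P$ refines $P'$. Fix a block $P_i$ of $P$. By the previous step $P_i$ is the union of those blocks of $P'$ that lie inside it, say $Q_1,\dots,Q_r$. If $r>1$, then for $C:=\{Q_1,\dots,Q_r\}$ we have $\bigcup C=P_i$ and $|C|>1$, so~\eqref{eq:3.1:5} gives $\{P_i\}=\{\bigcup C\}\prefto C$. This is an instance of the merge rule applicable to $P'$, namely merging $Q_1,\dots,Q_r$ into $P_i$, again contradicting closure. Hence $r=1$, so $P_i$ is itself a block of $P'$.

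The two refinements together force equality: every block of $P$ is a block of $P'$, and since the blocks of $P$ already cover $N$ disjointly, there is no room for further blocks in $P'$, whence $P'=P$. I expect the only delicate point to be checking that the collections exhibited in the two steps are genuine, non-trivial instances of the split and merge rules—that the split produces at least two parts (guaranteed by $P$-incompatibility) and that the merge combines at least two blocks (the case $r>1$)—and that the preference direction in each case matches the side condition of the respective rule. Once this bookkeeping is in place, Lemma~\ref{theorem:3.1} does all the real work.
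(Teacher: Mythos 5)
Your proof is correct, and it rests on the same key lemma (Lemma~\ref{theorem:3.1}) and the same two rule applications as the paper's proof, but the argument is organized differently. The paper argues by contradiction on a single block: assuming $P\neq P'$, it picks a block $P_{i_0}$ of $P$ absent from $P'$, takes the minimum cover of $P_{i_0}$ by blocks of $P'$, and case-splits --- if the cover is exact, the merge rule applies to $P'$ via \eqref{eq:3.1:5}; if it is strict, some covering block is $P$-incompatible and the split rule applies via \eqref{eq:3.1:4}. You instead disentangle these two cases into a sequential, global argument: first, closure under split plus \eqref{eq:3.1:4} forces every block of $P'$ to be $P$-compatible, so $P'$ refines $P$; second, closure under merge plus \eqref{eq:3.1:5} forces each block of $P$ not to be properly partitioned by blocks of $P'$, so the refinement is trivial and $P'=P$. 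Your version avoids the minimum-cover bookkeeping and makes the mutual-refinement structure explicit (at the cost of needing the small covering argument that justifies writing $P_i=Q_1\cup\dots\cup Q_r$), while the paper's version is more local --- one offending block suffices, and its two cases are independent rather than sequential. One slip in your preamble: \eqref{eq:3.1:5} is the inductive generalization of \eqref{eq:3.1:3} (via transitivity and \eqref{eq:mon2}), not a consequence of \eqref{eq:3.1:4}, so ``the latter'' should read ``the former''; your subsequent uses of \eqref{eq:3.1:5} are nevertheless correct.
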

\begin{proof}
  Suppose $P=\{P_1,\dots,P_k\}$, $P'=\{T_1,\dots,T_m\}$. Assume $P\neq
  P'$. Then there is $i_0\in\{1,\dots,k\}$ such that for all
  $j\in\{1,\dots,m\}$ we have $P_{i_0}\neq T_j$. Let
  $T_{j_1},\dots,T_{j_l}$ be the minimum cover of
  $P_{i_0}$. 
In the following case distinction we use Lemma~\ref{theorem:3.1}.
\II

\NI
\emph{Case 1}. $P_{i_0}=\bigcup_{h=1}^lT_{j_h}$.

    Then $\{T_{j_1},\dots,T_{j_l}\}$ is a proper partition of
    $P_{i_0}$. But \eqref{eq:3.1:3} (through its generalization to
(\ref{eq:3.1:5})) yields
    $P_{i_0}\prefto \{T_{j_1},\dots,T_{j_l}\}$, thus the merge rule is
    applicable to $P'$.
\II

\NI
\emph{Case 2}. $P_{i_0}\subsetneq\bigcup_{h=1}^lT_{j_h}$.

    Then for some $j_h$ we have $\emptyset\neq P_{i_0}\cap
    T_{j_h}\subsetneq T_{j_h}$, so $T_{j_h}$ is $P$-incompatible. By
    \eqref{eq:3.1:4} we have $\{T_{j_h}\}[P] \prefto \{T_{j_h}\}$, thus
    the split rule is applicable to $P'$.
\end{proof}
\VV

We can now present the desired result.

\begin{theorem} \label{thm:outcome}
Suppose that $\prefto$ is a comparison relation and
$P$ is a $\mathbb{D}_c$-stable partition.
Then
\begin{enumerate}[(i)]

\item $P$ is the outcome of every iteration of the merge and split rules.

\item $P$ is a unique $\mathbb{D}_p$-stable partition.

\item $P$ is a unique $\mathbb{D}_c$-stable partition.

\end{enumerate}

\end{theorem}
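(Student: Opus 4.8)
The plan is to derive all three parts from the three preceding lemmata together with Note~\ref{not:1} and Theorem~\ref{thm:Dp}; no new combinatorial work is required, since the substance was already carried by the characterization in Lemma~\ref{theorem:3.1} and exploited in Lemma~\ref{lemma:unique}.

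First I would establish (i). By Note~\ref{not:1} every iteration of the merge and split rules terminates, so it ends in some partition $P'$ to which neither rule applies, i.e.~$P'$ is closed under these rules. Since $P$ is $\mathbb{D}_c$-stable, Lemma~\ref{lemma:unique} applies with this $P'$ and yields $P'=P$. As the starting partition and the sequence of rule applications were arbitrary, $P$ is the common outcome of every iteration.

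Next I would prove (iii), which is essentially the same argument run in the opposite direction. Let $P'$ be any $\mathbb{D}_c$-stable partition. By Lemma~\ref{not:stable}, $P'$ is closed under the merge and split rules, and then Lemma~\ref{lemma:unique} again forces $P'=P$. Hence $P$ is the unique $\mathbb{D}_c$-stable partition.

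For (ii) the key observation is that every partition of $N$ is in particular a collection, so $\mathbb{D}_p(P)\subseteq\mathbb{D}_c(P)$ and the $\mathbb{D}_c$-stability of $P$ immediately yields its $\mathbb{D}_p$-stability. By Theorem~\ref{thm:Dp}, $P$ is then $\prefto$-maximal. Uniqueness follows from irreflexivity and transitivity: if some $Q\neq P$ were also $\mathbb{D}_p$-stable, hence $\prefto$-maximal, then $P\prefto Q$ and $Q\prefto P$ would give $P\prefto P$ by transitivity, contradicting irreflexivity. Thus $P$ is the unique $\mathbb{D}_p$-stable partition. The only points demanding care are the inclusion $\mathbb{D}_p(P)\subseteq\mathbb{D}_c(P)$ and the reading of ``outcome'' in (i) as a rule-closed terminal partition; both are routine once the definitions are unwound, so I do not expect any genuine obstacle here, the real work having been front-loaded into Lemma~\ref{lemma:unique}.
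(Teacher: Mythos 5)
Your proposal is correct and follows essentially the same route as the paper's own proof: part (i) from Note~\ref{not:1} plus Lemma~\ref{lemma:unique}, part (iii) from Lemma~\ref{not:stable} plus Lemma~\ref{lemma:unique}, and part (ii) from the inclusion of partitions among collections, Theorem~\ref{thm:Dp}, and transitivity/irreflexivity. The extra details you spell out (the terminal partition being rule-closed, and $\mathbb{D}_p(P)\subseteq\mathbb{D}_c(P)$) are exactly the steps the paper leaves implicit.
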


\begin{proof}
\NI
$(i)$ By Note \ref{not:1} every iteration of the merge and split
  rules terminates,
  so the claim  follows by Lemma \ref{lemma:unique}. \\[2mm]
  $(ii)$ Since $P$ is $\mathbb{D}_c$-stable, it is in particular
  $\mathbb{D}_p$-stable. By
Theorem \ref{thm:Dp} for all partitions $P'\neq P$, $P \prefto P'$ holds.
So uniqueness follows from the transitivity and
  irreflexivity of $\prefto$. \\[2mm]
  $(iii)$ Suppose that $P'$ is a $\mathbb{D}_c$-stable partition.  By
  Lemma \ref{not:stable} $P'$ is closed under the applications of the
  merge and split rules, so by Lemma \ref{lemma:unique} $P' = P$.
\end{proof}

This theorem generalizes~\cite{AR06}, where this result
was established for the coalitional TU-games and the 
utilitarian order.  It was also shown there that there exist
coalitional TU-games in which all iterations of the merge and split
rules have a unique outcome which is not a $\mathbb{D}_c$-stable
partition.

\section{Applications}
\label{sec:conc}

The obtained results do not involve
any notion of a game. 
In this section we show applications to three classes of coalitional games.
In each case we define a class of games and a natural comparison relation
for which all iterations of the merge and split
rules have a unique outcome.

\subsection{Coalitional TU-games}

To show that the obtained results naturally apply to coalitional
TU-games consider first the special case of the utilitarian order,
according to which given a coalitional TU-game $(N, v)$,
for two collections $P := \{P_1, \LL, P_k\}$ and $Q
= \{Q_1, \LL, Q_l\}$ such that $\bigcup P = \bigcup Q$, we have
\[
\mbox{$P \prefto Q$ iff $\sum_{i=1}^k v(P_i) > \sum_{i=1}^l v(Q_i)$.}
\]

Recall that $(N, v)$ is called
\emph{strictly superadditive} if for each pair of disjoint coalitions $A$ and $B$
\[
\mbox{$v(A) + v(B) < v(A\cup B)$.}
\]

Further, recall from \cite[ page 241]{Owe01} that given a partition $P := \{P_1, \LL, P_k\}$ of $N$ and
coalitional TU-games $(P_1, v_1), \LL, (P_k, v_k)$, 
their \emph{composition} $(N, \oplus_{i=1}^k v_i)$
is defined by
\[
(\oplus_{i=1}^k v_i)(A) = \sum_{i=1}^k v_i(P_i \cap A).
\]

We now modify this definition and introduce the concept of a \emph{semi-union} of  $(P_1, v_1), \LL, (P_k, v_k)$, 
written as $(N, \overline{\oplus}_{i=1}^k v_i)$, and defined by
\[
(\overline{\oplus}_{i=1}^k v_i)(A) := \left\{ 
\begin{tabular}{ll}
$(\oplus_{i=1}^k v_i)(A)$ &  \mbox{if $A \sse P_i$ for some $i$} \\
$(\oplus_{i=1}^k v_i)(A) - \epsilon$ &  \mbox{otherwise}
\end{tabular}
\right . 
\]
where $\epsilon > 0$.

So for $P$-incompatible coalitions the payoff is
strictly smaller for the semi-union of TU-games than for their union,
while for other coalitions the payoffs are the same.
It is then easy to prove using Lemma~\ref{theorem:3.1} that in the
semi-union $(N, \overline{\oplus}_{i=1}^k v_i)$ of strictly superadditive
TU-games the partition $P$ is $\defect_c$-stable. Consequently, by
Theorem~\ref{thm:outcome}, in this game $P$ is the outcome of every
iteration of the merge and split rules.

The following more general example deals with arbitrary
monotonic comparison relations as introduced in Sections~\ref{sec:TU} and \ref{sec:ind}.

\begin{example}
  \label{ex:genericstable}
  Given a partition $P := \{P_1, \ldots, P_k\}$ of $N$, with $\prefto$
  being one of the orders defined in Section~\ref{sec:TU}, we 
  define a TU-game for which $P$ is the outcome of every iteration of the merge and split rules.

Let
  \begin{align*}
    f(x,y)&:=
    \begin{cases}
      x+y&\text{if $\prefto$ is the utilitarian order}\\
      x\cdot y&\text{if $\prefto$ is the Nash order}\\
      \max\{x,y\}&\text{if $\prefto$ is the leximin order}\\
    \end{cases}\\
    \intertext{and define}
    v(A)&:= 
    \begin{cases}
      1&\text{if $|A|=1$}\\
      \max\limits_{\substack{B\cup C=A\\\text{$B$,$C$
            disjoint}\\\text{coalitions}}}\{f(v(B),v(C))\} + 1&\text{if $|A|>1$ and $A\subseteq P_i$
        for some $i$}\\
      0&\text{otherwise.}
  \end{cases}
  \end{align*}
  Then
  \begin{enumerate}[(i)]
  \item for any two disjoint coalitions $A,B$ with $A\cup B\subseteq P_i$
    for some $i$, we have \[ v(A\cup B)>f(v(A),v(B)) \] by construction of
    $v$, and thus
    \begin{itemize}
    \item $v(A\cup B)>v(A)+v(B)$ for utilitarian $\prefto$;
    \item $v(A\cup B)>v(A)\cdot v(B)$ for Nash $\prefto$;
    \item $v(A\cup B)>\max\{v(A),v(B)\}$ for leximin $\prefto$.
    \end{itemize}
    Hence in all cases $\{A\cup B\}\prefto\{A,B\}$.
  \item for any $P$-incompatible coalition $T\subseteq N$, we have
    \[ v(A)>0 \text{ for all $A\in\{T\}[P]$,\quad and } v(T)=0.
    \]
    Hence $\{T\}[P]\prefto\{T\}$.
  \end{enumerate}
  Lemma~\ref{theorem:3.1} now implies that $P$ is indeed $\defect_c$-stable,
  so Theorem~\ref{thm:outcome} applies.
\HB
\end{example}

\begin{example}
  Given a partition $P := \{P_1, \ldots, P_k\}$ of $N$, with $\prefto$
  being one of the orders defined in Section~\ref{sec:TU} or Pareto order of
  Section~\ref{sec:ind}, we define a TU-game and an individual
  value function for which $P$ is the outcome of every iteration of the merge and split rules.

Let

  \begin{align*}
    f(x,y)&:=
    \begin{cases}
      \vert N\vert\cdot\max\{x,y\}+1 &\text{if $\prefto$ is
        leximin or Pareto}\\
      x+y &\text{otherwise,}
    \end{cases}\\
    \intertext{define $v$ as in Example~\ref{ex:genericstable},
      and define}
    \phi_i^v(A)&:=\frac{v(A)}{\vert A\vert}.
  \end{align*}
  Then
  \begin{enumerate}[(i)]
  \item for any two disjoint coalitions $A,B$ with $A\cup B\subseteq P_i$
    for some $i$, we have \[ v(A\cup B)>f(v(A),v(B)) \] again by construction of
    $v$, and thus
    \begin{itemize}
    \item for utilitarian or Nash $\prefto$:\\
      $v(A\cup B)>v(A)+v(B)$, and since $\phi_i^v$ distributes the
      value evenly, in all cases $\{A\cup B\}\prefto\{A,B\}$,
    \item for leximin or Pareto $\prefto$:\\
      $v(A\cup B)>\vert A\cup B\vert\cdot\max\{v(A),v(B)\}$,\\
      thus $\phi_i^v(A\cup B)>\max\{v(A),v(B)\}$ for all $i$,\\
      thus $\{A\cup B\}\prefto\{A,B\}$ in all cases,
    \end{itemize}
  \item for any $P$-incompatible coalition $T\subseteq N$,
    $\{T\}[P]\prefto\{T\}$ as before.
  \end{enumerate}
  Again, Lemma~\ref{theorem:3.1} implies that $P$ is $\defect_c$-stable,
and Theorem~\ref{thm:outcome} applies.
\HB
\end{example}

\subsection{Hedonic games}

Recall that a \textit{hedonic game} $(N, \succeq_1, \LL, \succeq_n)$
consists of a set of players $N = \C{1, \LL, n}$ and a sequence of
linear preorders $\succeq_1, \LL, \succeq_n$, where each $\succeq_i$
is the preference of player $i$ over the subsets of $N$ containing
$i$.  In what follows we shall not need the assumption that the
$\succeq_i$ relations are linear.  Denote by $\succ_i$ the associated
irreflexive relation.

Given a partition $P$ of $N$ and player $i$ we denote by $P(i)$ the
element of $P$ to which $i$ belongs and call it the set of
\emph{friends of $i$ in $P$}.

We now provide an example of a hedonic game in which a
$\mathbb{D}_c$-stable partition w.r.t.~to a natural comparison
relation $\succ$ exists.

To this end we assume that, given a partition $P := \{P_1, \LL, P_k\}$
of $N$, each player

\begin{itemize}
\item prefers a larger set of his friends in $P$ over a smaller one,

\item `dislikes' coalitions that include a player who is not his friend in $P$.
\end{itemize}

We formalize this by putting for all sets of players that
include $i$
\[
\mbox{$S \succeq_i T$ iff $T \sse S \sse P(i)$,}
\]
and by extending this order to the coalitions that include player
$i$ and possibly players from outside of $P(i)$ by assuming that such coalitions are
the minimal elements in $\succeq_i$.
So 
\[
\mbox{$S \succ_i T$ iff either $T \subsetneq S \sse P(i)$ or $S \sse
  P(i)$ and not $T \sse P(i)$.}
\]

We then define for two partitions $Q$ and $Q'$ of the same set of players
\[
\mbox{$Q \prefto Q'$ iff for $i \in \C{1,\LL,n}$ $Q(i) \succeq_i Q'(i)$ with at least one $\succeq_i$ being strict.}
\]

(Note the similarity between this relation and the $\succ_{p}$ relation introduced in Section \ref{sec:ind}.)
It is straightforward to check that $\prefto$ is indeed a
comparison relation
and that the partition $P$ satisfies then conditions
(\ref{eq:3.1:3}) and (\ref{eq:3.1:4}) of Lemma \ref{theorem:3.1}.
So by virtue of this lemma $P$ is $\defect_c$-stable.
Consequently, on the account of Theorem \ref{thm:outcome},
the partition $P$ is the outcome of every
iteration of the merge and split rules.

\subsection{Exchange economy games}

Recall that an \emph{exchange economy} consists of

\begin{itemize}
\item a market with $k$ goods,

\item for each player $i$ an initial endowment of these goods represented
by a vector $\vec \omega_i \in {\cal R}_+^k$,

\item for each player $i$ a transitive and linear relation $\succeq_i$ using which he can compare the bundles of the goods,
represented as vectors from ${\cal R}_+^k$.
\end{itemize}

An \emph{exchange economy game} is then defined by first taking as the set of \emph{outcomes}
the set of all sequences of bundles,
\[
X := \C{(\vec {x}_1, \LL, \vec {x}_n)  \mid \vec {x}_i \in {\cal R}_+^k \mbox{ for } i \in N},
\]
i.e. $X = ({\cal R}_+^k)^n$, 
and extending each preference relation $\succeq_i$ from the set ${\cal R}_+^k$ of all bundles
to the set $X$ by putting for $\vec {x}, \vec {y} \in X$
\begin{equation}
  \label{eq:global}
\mbox{$\vec {x} \succeq_i \vec {y}$ iff $\vec {x}_i \succeq_i \vec {y}_i$.}  
\end{equation}
This simply means that each player is only interested in his own bundle.

Then we assign to each coalition $S$ the following set of outcomes:
\[
\mbox{$V(S) := \{\vec {x} \in X \mid  \sum_{i \in S} \vec {x}_i = \sum_{i \in S} \vec \omega_i$  and $\vec {x}_j = \vec \omega_j$ for all $j \in N \setminus S$\}.}
\]
So $V(S)$ consists of the set of outcomes that can be achieved by trading between the members of $S$.

Given a partition $P=\{P_1,\dots,P_k\}$ of $N=\{1,\dots,n\}$
we now define a specific exchange economy game with $n$ goods (one type of good for each player) as follows, where
$i\in N$:
\begin{align*}
  \vec \omega_i&:=\text{characteristic vector of $P(i)$},\\
  \vec x_i&\succeq_i \vec y_i\text{ iff }x_{i,i}\geq y_{i,i}\quad\text{and}\quad\vec x_i\succ_i \vec y_i\text{ iff }x_{i,i}>y_{i,i},
%   \vec x&\succeq_i \vec y\text{ iff }\vec x_i\succeq\vec y_i\quad
%   \text{and}\quad\vec x\succ_i \vec y\text{ iff }\vec x_i\succ\vec y_i,
\end{align*}
that is, each player's initial endowment consists of exactly one good of the type
of each of his friends in $P$, and he prefers a bundle if he gets more goods
of his own type.

% So by (\ref{eq:global}) each player prefers 
% an outcome, i.e., a sequence of $n$ bundles, if he prefers his own
% bundle in it.

Now let $A\prefto B$ iff
\begin{gather*}
  \forall A_l\in A \setminus B \: \exists\vec x\in
  V(A_l)\: \forall j\in A_l\\
  \left[\big(\forall\vec y\in V(B(j))\vec
    x\succ_j\vec y\big)\vee\big(\forall\vec y\in V(B(j))\vec x\succeq_j\vec
    y\wedge |A_l|<|B(j)|\big)\right].
\end{gather*}

So a partition $A$ is preferred to a partition $B$ if each coalition $A_l$
of $A$ not present in the partition $B$
can achieve an outcome which each player of $A_l$
strictly prefers to any outcome of his respective coalition in $B$,
or which he likes at least as much as any outcome of his respective coalition
in $B$ when that coalition is strictly larger than $A_l$.
The intuition is that the players' preferences over outcomes weigh
most, but in case of ties the players prefer smaller coalitions.

It is easy to check that $\prefto$ is a comparison relation.
We now prove that the partition $P$ is $\mathbb{D}_c$-stable w.r.t.~$\prefto$.
First, note that by the definition of the initial endowments
for all $l\in\{1,\dots,k\}$ and coalitions $A\subseteq P_l$ there is an outcome
$\vec z_A\in V(A)$ which gives exactly $|A|$ units of good $j$ to each player
$j\in A$. We have $\vec z_A\succeq_i\vec x$ for all $i\in A$ and $\vec
x\in V(A)$. This implies that $P$ is $\mathbb{D}_c$-stable by Lemma
\ref{theorem:3.1} since
\begin{enumerate}[(i)]
\item for each pair of disjoint
    coalitions $A$ and $B$ such that $A\cup B\subseteq P_l$
we have $\vec z_{A\cup B}\succ_i\vec z_A$ for each $i\in A$ and $\vec z_{A\cup
    B}\succ_i\vec z_B$ for each $i\in B$ since $|A\cup B|>|A|$ and
  $|A\cup B|>|B|$, thus $\{A\cup B\}\prefto\{A,B\}$,

\item for any $P$-incompatible $T\subseteq N$, $A\in\{T\}[P]$,
  $i\in A$, and $\vec x\in V(T)$, we have $\vec z_A\succeq_i\vec x$
  (since player $i$ can get in $T$ at most all goods of his
  type from his friends in $P$, which are exactly the same as in $A$), and
  $|A|<|T|$, thus $\{T\}[P]\prefto\{T\}$.
\end{enumerate}

Consequently, in the above game, by Theorem \ref{thm:outcome}
the partition $\{P_1, \LL, P_k\}$ is the outcome of every
iteration of the merge and split rules.

\section{Conclusions}
\label{sec:conclusions}

We have presented a generic approach to coalition formation,
in which the only possible operations on coalitions are merges and splits.
These operations can take place when they result in an improvement
with respect to some given comparison relation on partitions of the involved subset of players.
Such a comparison relation needs to satisfy only a few natural properties,
namely irreflexivity, transitivity and monotonicity,
and we have given examples induced by several well-known orders in the context of TU-games.

We have identified natural conditions under which every iteration of
merges and splits yields a unique outcome, which led to a natural
notion of a stable partition.  We have shown that besides TU-games our
approach and results also naturally apply to hedonic games and
exchange economy games.

It would be interesting to extend this approach and allow other transformations,
such as transfers (moving a subset of one coalition to another)
or, more generally, swaps (exchanging subsets of two coalitions),
as considered in~\cite{AR06} in the setting of TU-games and utilitarian order.

\section*{Acknowledgements}
We thank Tadeusz Radzik for helpful comments.

\bibliographystyle{plain}

\bibliography{e}

\end{document}